\documentclass[11pt]{article}
\usepackage{chaithanya}
\title{Lee-Yang theorem for fermions}
\author{Chaithanya Rayudu\thanks{University of Cambridge, sscr2@cam.ac.uk}
\and
Takahiro Misawa\thanks{University of Tokyo, \{tmisawa, juntakahashi\}@issp.u-tokyo.ac.jp}
\and
Andrew Zhao\thanks{Sandia National Laboratories, azhao@sandia.gov}
\and
Jun Takahashi\footnotemark[2]
}
\date{}

\begin{document}

\maketitle

\begin{abstract}
    Lee-Yang theorems are a powerful tool for studying many-body systems, with applications ranging from analyzing phase transitions to proving the efficiency of certain classical and quantum algorithms. In this work, we prove a Lee-Yang zero-freeness theorem for the partition function of a broad class of interacting fermion models, implying the existence of a provably efficient quantum algorithm for estimating their ground-state energies. This class includes several well-known models such as the attractive Hubbard model, repulsive Hubbard model on bipartite graphs, and the interacting Hofstadter model. Our results also rigorously establish the nonexistence of phase transitions in these models in the presence of a nonzero local external field.
\end{abstract}

\section{\label{sec:intro}Introduction}
The celebrated Lee-Yang theorem \cite{lee1952} states that the complex zeros of the partition function of the ferromagnetic Ising model, regardless of its spatial dimension or interaction strengths, lie on the imaginary axis in the complex plane of the magnetic field.
Physically, this zero-freeness of the partition function implies that the free energy is analytic, which in turn implies the nonexistence of phase transitions. This has provided a powerful mathematical tool to analytically understand phase transitions, or lack thereof, in physical systems.
Consequently, it has spurred on a plethora of works that probe classical and quantum phase transitions via the study of complex zeros of the partition function 
\cite{fisher1965nature,bena2005statistical,heyl2018dynamical}. 
More than 70 years later today, the Lee-Yang theorem continues to inspire new directions in physics research, ranging from high-energy theory \cite{mukherjee2021universality, wada2025locating} to condensed-matter experiments \cite{gao2024experimental}. 

A notable development in theoretical computer science in recent years was the realization that efficiency guarantees for classical algorithms can be obtained from the Lee-Yang theorem and its variants \cite{barvinok2016combinatorics,patel2017deterministic}. 
Physically, this can be understood as the lack of certain phase transitions allowing for techniques such as series expansions to converge efficiently, or for Markov chains to mix quickly  \cite{chen2024spectral}.  More specific to quantum systems, Asano \cite{asano1970} extended the Lee-Yang theorem to the ferromagnetic quantum Heisenberg model, and this was
subsequently fully generalized to a broader class of two-body interacting quantum spin systems by Suzuki and Fisher \cite{suzuki1971}.
These extensions have led to classical algorithms for a subclass of such Hamiltonians \cite{harrow2020classical}, and more recently, to provably efficient quantum algorithms for the entire class \cite{rayudu2026spectral,bravyi2026efficient}. 

The remarkable power of Lee-Yang-type theorems is that they allow us to have rigorous mathematical control of strongly interacting systems far away from perturbative regimes, a rare property to come across in the study of quantum many-body systems. 
This naturally leads to the question of how far this zero-freeness property extends to. 
For instance, although various studies have computed partition function zeros for specific fermionic systems \cite{PhysRevB.53.7704,wakayama2019lee,shastry2025partition}, a rigorous zero-freeness theorem for fermions is surprisingly unavailable. In fact, such a theorem is unknown even for the simplest possible case of noninteracting free-fermion models. This gap in understanding drives our motivation for the present work: to establish a Lee-Yang theorem for interacting fermions and classify such systems that exhibit this zero-freeness property.

\subsection{Main results}

We consider the following broad class of interacting fermionic Hamiltonians and prove a Lee-Yang zero-freeness theorem for all the Hamiltonians in this class.

\begin{definition}\label{def:lee-yang}
The class of \emph{fermionic Lee-Yang Hamiltonians} is defined as
\begin{subequations}
    \label{eq:free_fermion_and_interactions}
    \begin{gather}
        H = H_0 + V \\[6pt]
        \text{where} \quad H_0 := \sum_{i,j} A_{ij}\, a^{\dagger}_i a_j \,+\, \frac{1}{2}\sum_{i,j}\,\left(\,B_{ij}\, a^{\dagger}_i a^{\dagger}_j + B_{ij}^{*}\, a_j a_i\,\right),\label{eq:free_fermion}\\
        V := - \sum_{i\,<\,j}\, |U_{ij}| \left(n_i -\frac{1}{2}\right)\left(n_j -\frac{1}{2}\right)
    \end{gather}
\end{subequations}
with a positive-semidefinite (PSD) hopping matrix $A = A^{\dagger} \succeq 0$, an arbitrary antisymmetric pairing matrix $B = -B^T$, and attractive interactions with arbitrary strength $|U_{ij}|$.
\end{definition}

Define the partition function
\begin{equation}
    \label{eq:partition_function}
    \ZZ(h) := \mathrm{Tr}\left[\exp\left(-\beta\left(H + h \sum_i n_i\right)\right)\right]
\end{equation}
as a function of $h \in \mathbb{C}$, where $n_i := a_i^\dagger a_i$.
\begin{theorem}
    \label{thm:Lee-Yang_theorem}
    For any Hamiltonian $H$ from the class in \cref{def:lee-yang}, in the complex plane of $h$, all the zeros of the partition function $\ZZ(h)$ satisfy $ \mathrm{Re}(h) \in [-\lambda_{\max}(A), 0].$
    In particular, when $A = 0$, they all lie on the imaginary axis.
\end{theorem}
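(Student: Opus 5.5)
The plan is to split the statement into two half-plane claims. The first is that $\ZZ(h)\neq 0$ whenever $\mathrm{Re}(h)>0$. The second is that $\ZZ(h)\neq 0$ whenever $\mathrm{Re}(h)<-\lambda_{\max}(A)$. I will derive the second from the first by a particle--hole transformation. Under $a_i\mapsto a_i^\dagger$ we have $n_i\mapsto 1-n_i$, so $n_i-\tfrac12\mapsto -(n_i-\tfrac12)$ and $V$ is invariant. The pairing term maps to another antisymmetric pairing term. The hopping term maps to $-\sum_{ij}A^T_{ij}a_i^\dagger a_j$ plus the constant $\mathrm{Tr}A$, and the field term maps to $-h\sum_i n_i$ plus a constant. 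The resulting one-body part is therefore
\[
\bigl(\lambda_{\max}(A)\,I-A^T\bigr)+\bigl(-h-\lambda_{\max}(A)\bigr)I .
\]
Here $\lambda_{\max}(A)I-A^T\succeq 0$, so the transformed Hamiltonian lies in the class of \cref{def:lee-yang} with field $h'=-h-\lambda_{\max}(A)$. Since $\mathrm{Re}(h')>0$ exactly when $\mathrm{Re}(h)<-\lambda_{\max}(A)$, the second claim follows from the first, up to a nonvanishing exponential prefactor. When $A=0$ the two claims together force $\mathrm{Re}(h)=0$. As a sanity check, in the noninteracting case without pairing one has $\ZZ(h)=\prod_k\bigl(1+e^{-\beta(\lambda_k+h)}\bigr)$, whose zeros satisfy $\mathrm{Re}(h)=-\lambda_k$, consistent with the interval.

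For the core claim I will absorb the field into the hopping, replacing $A$ by $A+hI$, and in fact prove a stronger multi-field statement. Site-dependent fields $h_i$ enter as $A+\mathrm{diag}(h_i)$, and the claim is that $\ZZ(h_1,\dots,h_n)\neq0$ whenever all $\mathrm{Re}(h_i)>0$. The diagonal case is then recovered by setting all $h_i=h$.

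The proof will follow the Suzuki--Fisher strategy. Trotterize
\[
e^{-\beta H}=\lim_{N\to\infty}\bigl(e^{-\beta H_0/N}\,e^{-\beta V/N}\,e^{-\beta\sum_i h_i n_i/N}\bigr)^N ,
\]
and in each time slice give the field its own variable $z_i^{(t)}=e^{-\beta h_i/N}$. Expanding the trace in the occupation-number basis writes the Trotterized $\ZZ$ as a contraction of local multi-affine polynomials in the variables $z_i^{(t)}$. The interaction factors are diagonal and ferromagnetic two-site factors, and these have the Lee--Yang circle property (nonvanishing for $|z|<1$) by the classical two-site computation. The pieces are then glued by Asano contractions, which preserve nonvanishing on the unit polydisk. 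After gluing, I will pass to the limit $N\to\infty$ using Hurwitz's theorem. The limit function is not identically zero because $\ZZ>0$ for real $h$, so it remains zero-free on the open half-plane.

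The main obstacle is the free-fermion block $e^{-\beta H_0/N}$. It is not diagonal, it couples many sites at once, and its matrix elements carry fermionic signs, so the spin-based Suzuki--Fisher argument does not apply directly. I would first try to prove a Lee--Yang property directly for the Gaussian operator $e^{-\epsilon(H_0+\sum_i h_i n_i)}$. Such operators have traces and matrix elements given by Pfaffian or determinant formulas. I would use the PSD condition on $A$ to show that $\det\bigl(I+e^{-\epsilon(A+\mathrm{diag}(h))}\bigr)$-type quantities, together with their partial (minor) versions needed for the contraction, do not vanish when $\mathrm{Re}(h_i)>0$. This would be done via an accretivity argument: $A+\mathrm{diag}(h)$ has numerical range in the open right half-plane. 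If the sign structure prevents a clean slice-by-slice contraction, the fallback is to keep $H_0$ intact and Trotterize only $V$. Each ferromagnetic factor $e^{\epsilon|U_{ij}|(n_i-\frac12)(n_j-\frac12)}$ would be written as a positive combination of operators that act on the pair $(i,j)$ and preserve the needed half-plane stability. The task then reduces to showing that inserting such factors into a zero-free Gaussian trace keeps it zero-free.
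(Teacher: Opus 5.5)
Your outer scaffolding matches the paper's: independent field variables $u_{i,m}$ per mode and Trotter slice, the classical Lee--Yang theorem for the diagonal ferromagnetic part, an Asano-type combination, Hurwitz's theorem to remove the Trotterization, and the particle--hole map for $\mathrm{Re}(h)<-\lambda_{\max}(A)$ (your $-A^T$ is the paper's $-A^*$, since $A$ is Hermitian).

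The gap is the step that carries all the content. You need to show that the multi-affine polynomial $\ZZ_\epsilon(H_0,\bm u)=\mathrm{Tr}\prod_m\bigl[e^{-\epsilon H_0}\prod_i(\I-n_i+u_{i,m}n_i)\bigr]$ is nonzero on the open unit polydisc. You leave this as something you ``would try'', and the tool you name does not reach it. Accretivity of $A+\mathrm{diag}(h)$ (in effect, $\|e^{-\epsilon A}\|\le 1$ together with $|u_{i,m}|<1$) only handles $B=0$. There the trace is $\det\bigl(\I+\prod_m e^{-\epsilon A}\,\mathrm{diag}(u_{\cdot,m})\bigr)$ and a norm bound suffices. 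With pairing the situation changes, and this is exactly where the paper's main examples land: the attractive and bipartite repulsive Hubbard models, including the Hofstadter model, have $A=0$ and $B\neq0$ after the pseudospin or particle--hole map. The trace is then governed by the indefinite Bogoliubov--de Gennes matrix $Q=\begin{pmatrix}A&B\\B^\dagger&-A^*\end{pmatrix}$. The operator $e^{-\epsilon Q}$ is not a norm contraction, and the field enters each slice as $\mathrm{diag}(u_{\cdot,m},u_{\cdot,m}^{-1})$. So no norm or numerical-range bound can work.

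The missing idea is a contraction for an indefinite metric. With $\Gamma=\mathrm{diag}(\I_N,-\I_N)$ one has $\{\Gamma,Q\}=2\,\mathrm{diag}(A,A^*)\succeq0$, hence $W\Gamma W\preceq\Gamma$ for $W=e^{-\epsilon Q}$; this is where $A\succeq 0$ is used. The paper writes $\ZZ_\epsilon(H_0,\bm u)^2=e^{-\beta\mathrm{Tr}A}\det\mathfrak D$ with a block-cyclic matrix $\mathfrak D$. Its diagonal blocks are $F_m=\mathrm{diag}(\bm 1,u_{\cdot,m})$ and its off-diagonal blocks are $\pm WE_{m+1}$ with $E_m=\mathrm{diag}(u_{\cdot,m},\bm 1)$, which clears the $u^{-1}$. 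Applying the $\Gamma$-inequality to $F_m\phi_m=\pm WE_{m+1}\phi_{m+1}$ and summing around the cycle shows that any null vector satisfies $\sum_m\sum_{j<N}(1-|u_{j,m}|^2)\bigl(|\phi_{m,j}|^2+|\phi_{m,N+j}|^2\bigr)\le 0$, so $\phi=0$.

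There is a second problem. Gluing $e^{-\epsilon H_0}$ slice by slice via Asano contractions, as you plan, would need a Lee--Yang property for the generating polynomial of its matrix elements, which the paper notes fails for fermions. The paper avoids this entirely. Because $V$ is diagonal in the occupation basis, inserting $e^{-\epsilon V}$ multiplies the coefficient of each monomial $u^X$ by $\prod_m e^{-\epsilon V(X_m)}$. Hence $\ZZ_\epsilon(H,\bm u)=\ZZ_\epsilon(H_0,\bm u)\odot\ZZ_\epsilon(V,\bm u)$ is a single Schur product of two polydisc-zero-free polynomials, and Hinkkanen's lemma finishes. Your fallback points in this direction, but it identifies neither this coefficientwise structure nor a proof that the free-fermion factor is zero-free. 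As written, the argument is incomplete at its central step.
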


We first prove the theorem for the Trotterized version of the partition function, as is commonly done, and then extend it to the exact, non-Trotterized version. Because the trace is unitarily invariant, the fermionic Lee-Yang theorem naturally applies to $U^\dagger H U$ for any Bogoliubov transformation $U$, so long as the external field $\sum_i n_i \to U^\dagger \left(\sum_i  n_i\right) U$ is rotated appropriately. This extended class includes many well-studied models of interacting fermions, which we discuss in \cref{sec:physical_models}.

\paragraph{Efficient quantum algorithm for ground energy estimation.}
An immediate algorithmic implication of our Trotterized Lee-Yang theorem for fermions is a field-induced spectral gap for this Hamiltonian class. This consequently implies that, on a quantum computer, one can efficiently estimate their ground-state energies, essentially following the arguments from \cite{rayudu2026spectral}.
\begin{theorem}
\label{thm:spectral_gap}
    For any Hamiltonian $H$ from the class in \cref{def:lee-yang}, $H + h \sum_i n_i$ with $h>0$ has a spectral gap of at least $h/8$ above the ground state. Consequently, there is an efficient quantum algorithm, running in time $\mathrm{poly}(N, \|H\|_{\mathrm{op}}, 1/\epsilon)$, to estimate the ground-state energy of $H$ to within any $\epsilon > 0$ additive precision, where $N$ is the number of fermion modes.
\end{theorem}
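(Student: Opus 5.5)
The plan is to derive the gap from the Lee-Yang zero-freeness of \cref{thm:Lee-Yang_theorem}, and then run the algorithmic argument of \cite{rayudu2026spectral}. Since the statement needs the \emph{Trotterized} version, I will work with
\[
\ZZ_M(h)=\mathrm{Tr}\left[\left(e^{-\beta H_0/M}e^{-\beta V/M}e^{-\beta h N_{\mathrm{tot}}/M}\right)^M\right],\qquad N_{\mathrm{tot}}:=\sum_i n_i .
\]
After the change of variables $z=e^{-\beta h/M}$, this is a polynomial in $z$ whose zeros satisfy $\mathrm{Re}(h)\in[-\lambda_{\max}(A),0]$. The physical content is that the single-step transfer operator $T(h)=e^{-\beta H_0/M}e^{-\beta V/M}e^{-\beta h N_{\mathrm{tot}}/M}$ has no eigenvalue-degeneracy structure that would allow $\ZZ_M$ to vanish for $\mathrm{Re}\,h>0$. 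I will use two consequences of this.

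\textbf{Step 1: from zero-freeness to a gap.} Fix real $h_0>0$. By the theorem, $\log \ZZ_M(h)$ is analytic in the half-plane $\mathrm{Re}(h)>0$ for every $\beta$ and $M$. Passing to $M\to\infty$ gives the same statement for the exact $\ZZ$, uniformly in $\beta$. Now consider
\[
f_\beta(h)=\frac{1}{\beta}\log\ZZ(h)=-E_0(h)+\frac{1}{\beta}\log\Bigl(g+\sum_{k\ge 1} e^{-\beta(E_k(h)-E_0(h))}\Bigr).
\]
Suppose the gap $\Delta(h_0)=E_1(h_0)-E_0(h_0)$ were smaller than $h_0/8$. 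I will follow the argument of \cite{rayudu2026spectral}, which is a Jensen/Rouché-type estimate. The two lowest levels contribute a factor $e^{-\beta E_0}\bigl(1+e^{-\beta\Delta}e^{-i\beta\,\mathrm{Im}(\cdot)}\bigr)$. Perturbing $h\to h_0+it$ rotates their relative phase at rate $\beta\,|\langle N_{\mathrm{tot}}\rangle_1-\langle N_{\mathrm{tot}}\rangle_0|$. Combined with a Hellmann--Feynman bound and the particle-number parity/charge structure, this shows that $\ZZ$ acquires a zero inside the disk $|h-h_0|<h_0$ as $\beta\to\infty$.

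More precisely, I plan to apply the quantitative lemma of \cite{rayudu2026spectral}. That lemma states: if $\ZZ(h)$ is zero-free on a disk of radius $r$ around $h_0$ for all $\beta$, and $\partial_h E$ (here $\langle N_{\mathrm{tot}}\rangle$) changes between distinct eigenvalues, then $\Delta(h_0)\ge c\,r$. The constant $1/8$ comes from the geometry of fitting a disk into the half-plane at distance $h_0$ from the boundary, together with the Koebe-type constants in that lemma. One subtlety is that the zero-free region here is $\mathrm{Re}\,h>0$, which for $A\neq0$ is exactly what \cref{thm:Lee-Yang_theorem} guarantees to the right of $0$. The disk of radius $h_0$ around $h_0$ therefore fits, and the bound is independent of $A$.

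\textbf{Step 2: the algorithm.} I will adiabatically (or via phase estimation with a guiding path) follow $H+hN_{\mathrm{tot}}$ from large $h$ down to $h=\epsilon$. At large $h$ the ground state is the vacuum, which is exactly preparable. Along the path the gap is at least $h/8\ge\epsilon/8$, so the adiabatic theorem gives runtime $\mathrm{poly}(N,\|H\|,1/\epsilon)$. The endpoint energy differs from $E_0(H)$ by at most $\epsilon N$ by Weyl's inequality, so I will rescale $\epsilon\to\epsilon/N$. Encoding fermions via Jordan--Wigner keeps everything polynomial.

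\textbf{Main obstacle.} The hardest step is Step 1, specifically transferring the zero-freeness to a gap when the ground space may be degenerate or level crossings may occur in $N_{\mathrm{tot}}$-sectors. This matters because $N_{\mathrm{tot}}$ does not commute with $H$ when $B\neq0$. I will first check that the argument of \cite{rayudu2026spectral} uses only zero-freeness in a half-plane and analyticity in $h$, not commutation. If it does rely on commutation, I would instead work with the Trotterized form, where the field factor $e^{-\beta hN_{\mathrm{tot}}/M}$ enters multiplicatively, and redo their estimate using the polynomial $\ZZ_M$ in $z$.
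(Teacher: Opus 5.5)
Your Step 1 has a genuine gap. You only use the single-variable, uniform-field zero-freeness: first $z=e^{-\beta h/M}$, then the exact $\ZZ(h)$ after $M\to\infty$. The ``quantitative lemma'' you attribute to \cite{rayudu2026spectral} (zero-freeness of $\ZZ(h)$ on a disk of radius $r$ around $h_0$ for all $\beta$ implies $\Delta(h_0)\ge c\,r$) is false as stated. Replace $H$ and $\sum_i n_i$ by $H\otimes\I_2$ and $(\sum_i n_i)\otimes\I_2$. Every $\ZZ_M(h)$ is multiplied by $2$, so all zero-free regions are unchanged, yet the ground state is exactly doubly degenerate.

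More intrinsically, your phase-rotation mechanism only detects pairs of levels whose energies move differently with $h$. If $E_0(h)$ and $E_1(h)$ are nearly degenerate with nearly equal slopes $\partial_h E$, then near $h_0$ one has $\ZZ\approx e^{-\beta E_0}(1+e^{-\beta\Delta})$, and no zero is forced into the disk. Your extra hypothesis that $\partial_h E$ ``changes between distinct eigenvalues'' is exactly the statement that would need proof, and nothing in a one-variable function of $h$ can supply it. Your fallback of redoing the estimate with the Trotterized $\ZZ_M$ as a polynomial in the single variable $z$ fails for the same reason. Note also that $1/8$ is not derived anywhere; it is a guess about constants in a lemma that does not exist in the form you state.

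The missing input is the multivariate statement the paper actually starts from: \cref{thm:unit_disk_zero_free_interacting_fermions}. This is zero-freeness of $\ZZ_\epsilon(H,\bm u)$ on the whole open polydisc, in independent variables $u_{i,m}$, one per mode and per Trotter slice. The paper obtains the gap by feeding this into the argument of \cite{rayudu2026spectral}, with the occupation-number basis replacing the spin computational basis. The mode- and slice-resolved variables let you insert complex diagonal weights $\prod_i(\I-n_i+u_{i,m}n_i)$ at chosen imaginary times, not just a uniform field. Specializing to $\bm u=e^{-\epsilon h}\bm 1$ before deriving the gap discards exactly the information that distinguishes a gapped spectrum from the degenerate-copy example above.

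Your Step 2 is essentially fine, with one minor correction. For $B\neq0$ the vacuum is not an eigenstate of $H+h\sum_i n_i$ at any finite $h$, because the pairing term creates pairs from it. It is cleaner to interpolate $sH+h\sum_i n_i$ for $s\in[0,1]$ at fixed $h=\epsilon/N$. Since $sH$ stays in the class of \cref{def:lee-yang}, the gap is at least $h/8$ along the whole path, and at $s=0$ the vacuum is the exact unique ground state.
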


\subsection{\label{sec:physical_models}Physical models}

The class of fermionic Lee-Yang Hamiltonians contains, up to unitary equivalence, several models of physical interest. In particular, despite the attractive nature of the interactions in \cref{def:lee-yang} and the PSD restriction on the hopping matrix, simple transformations of the modes enable us to go significantly beyond such seeming constraints. 
We give a few examples based on Hubbard models, which are central to the study of strongly correlated electrons \cite{martin2016interacting,arovas2022hubbard}. 

\paragraph{Attractive Hubbard model in a pairing field.}
Consider the spinful attractive Fermi-Hubbard model,
\begin{align}\label{eq:attractive_hubbard}
    H \,=\, \sum_{\sigma \in \{\uparrow, \downarrow\}}\,\sum_{i,j}\, t_{ij, \sigma}\; a^{\dagger}_{i\sigma} a_{j\sigma} \,-\, \sum_{i}\, |U_i| \left(n_{i\uparrow} - \frac{1}{2}\right)\left(n_{i\downarrow} - \frac{1}{2}\right),
\end{align}
where the hopping matrices satisfy $t_{ij,\uparrow} = t_{ij,\downarrow}^*$.
To see that \cref{thm:Lee-Yang_theorem} applies to this family, we apply the following canonical transformation (known as a pseudospin rotation):
\begin{align}
\label{eq:canonical_transformation_attractive_Hubbard_model}
    b_{i1} \,:=\, \frac{a_{i\uparrow} + a^{\dagger}_{i\downarrow}}{\sqrt{2}} \quad \text{and} \quad b_{i2} \,:=\, \frac{a^{\dagger}_{i\uparrow} - a_{i\downarrow}}{\sqrt{2}}.
\end{align}
The condition $t_{ij,\uparrow}=t_{ij,\downarrow}^{*}$ ensures that all number-conserving quadratic terms cancel under this transformation, therefore mapping the model's hopping terms into purely pairing terms up to an additive constant.
In the context of \cref{eq:free_fermion}, this leaves us with $A = 0$ and nonzero $B$. Meanwhile, the attractive interactions remain attractive under this transformation.

Most interesting is what happens to the external field: for each $i$, the pseudospin rotation maps
\begin{equation}
    \tilde{n}_{i1} + \tilde{n}_{i2} = a^{\dagger}_{i\uparrow}a^{\dagger}_{i\downarrow} + a_{i\downarrow}a_{i\uparrow} + 1
\end{equation}
where $\tilde{n}_{i\alpha} = b_{i\alpha}^\dagger b_{i\alpha}$ are the occupation-number operators in the pseudospin basis. Thus the field that gives rise to the Lee-Yang phenomenon here is in fact a \textit{pairing} field in the original basis. 

We can generalize the above class to include some forms of spin-orbit couplings. Consider hopping terms of the form
\begin{equation}\label{eq:hopping_spin_orbit}
    \sum_{i,j} \sum_{\sigma, \sigma'} t_{i\sigma,\, j\sigma'}\, a^{\dagger}_{i\sigma} a_{j\sigma'}
\end{equation}
with the constraint that $t$ is time-reversal invariant, i.e., $K^T t K = t^{*}$ where $K = \I \otimes i\sigma_y$. This is equivalent to the matrix $t$ being of the form $C \otimes\I_2 \,+\, i\sum_{\alpha\in x,y,z} D_\alpha \otimes \sigma_{\alpha}$ with real symmetric $C$ and real antisymmetric $D_\alpha$. These terms also map to purely pairing Hamiltonians under the pseudospin rotation, and therefore the Lee-Yang theorem applies with the pairing field.

\paragraph{Repulsive Hubbard model on bipartite graphs.}
Another family consists of spinful \textit{repulsive} Hubbard models on any bipartite graph. Consider the vertex partitioning $\mathcal{A} \cup \mathcal{B}$ and take the hopping terms across the bipartition to be spin independent:
\begin{align}\label{eq:repulsive_bipartite_hubbard}
    H = \sum_{\sigma \in \{\uparrow, \downarrow\}}\,\sum_{i \in \mathcal{A},\, j \in \mathcal{B}}\, \left( t_{ij}\; a^{\dagger}_{i\sigma} a_{j\sigma} +t^{*}_{ij}\; a^{\dagger}_{j\sigma} a_{i\sigma} \right) \,+\, \sum_{i}\, |U_i| \left(n_{i\uparrow} - \frac{1}{2}\right)\left(n_{i\downarrow} - \frac{1}{2}\right).
\end{align}
To exhibit a Lee-Yang phenomenon, we place this system in a staggered Zeeman field 
\begin{equation}
    h \sum_{i \in \mathcal{A}} (n_{i\uparrow} - n_{i\downarrow}) - h \sum_{i \in \mathcal{B}} (n_{i\uparrow} - n_{i\downarrow}).
\end{equation}

To show that this family also lies in the fermionic Lee-Yang class, we may apply a simple unsigned particle-hole transformation ($a \rightarrow a^{\dagger}$ and $a^{\dagger} \rightarrow a$) on the modes $(i, \downarrow)$ for all $i \in \mathcal{A}$, and $(i, \uparrow)$ for all $i \in \mathcal{B}$; the other modes remain untouched. Under this mapping, the hopping terms become pairing, the repulsive interaction becomes attractive, and the staggered field becomes uniform, thereby meeting the criteria for \cref{thm:Lee-Yang_theorem} to hold.

The theorem generalizes to spinless repulsive bipartite Hubbard models as well.
Take
\begin{align}
     H = \sum_{i \in \mathcal{A},\,j \in \mathcal{B}}\, \left( t_{ij}\; a^{\dagger}_{i} a_{j} +t^{*}_{ij}\; a^{\dagger}_{j} a_{i} \right) \,+\, \sum_{i \in \mathcal{A},\,j \in \mathcal{B}}\, |U_{ij}| \left(n_{i} - \frac{1}{2}\right)\left(n_{j} - \frac{1}{2}\right)
\end{align}
in a staggered field $h \sum_{i} \eta_i n_i$, where $\eta_i = 1$ for $i \in \mathcal{A}$ and $\eta_i = -1$ for $i \in \mathcal{B}$. A particle-hole transformation on sublattice $\mathcal{B}$ transforms the Hamiltonian into a subclass of \cref{eq:free_fermion_and_interactions} as before, and maps the staggered field into a uniform field.

\paragraph{Repulsive Hubbard model with large magnetic field.}
One can apply the above transformation in a different way to obtain repulsive Hubbard models even when the graph is nonbipartite. 
Such a Hamiltonian satisfies the fermionic Lee-Yang condition as long as a sufficiently strong Zeeman field is added: 
\begin{align}
H = \sum_{i,j,\sigma} t_{ij} a_{i\sigma}^\dagger a_{j\sigma}
+ \sum_i |U_i| \left(n_{i\uparrow}-\frac{1}{2}\right)\left(n_{i\downarrow}-\frac{1}{2}\right)+h_{\mathrm{Zeeman}}\sum_i\left(n_{i\uparrow}-n_{i\downarrow}\right),
\end{align} 
subject to a field $h\sum_i (n_{i\uparrow}-n_{i\downarrow})$ in the same direction as $h_{\mathrm{Zeeman}}$. 
With the particle-hole transformation only applied to the $\sigma={\downarrow}$ modes for \textit{all} sites, the repulsive interaction turns attractive, and the hopping term for $\sigma={\uparrow}$ modes remains unchanged. 
The other hopping terms obtain a minus sign $t_{ij} a^\dagger_{i\downarrow}a_{j\downarrow} \to t_{ij} a_{i\downarrow}a_{j\downarrow}^\dagger = -t_{ij} a_{j\downarrow}^\dagger a_{i\downarrow}$, and therefore the corresponding matrix $A$ from \cref{eq:free_fermion_and_interactions} is block diagonal with blocks $t+h_{\mathrm{Zeeman}}\I$ and $-t^{T}+h_{\mathrm{Zeeman}}\I$.
Observe that $t$ and $-t^{T}$ have the same eigenvalues but with flipped signs, so in order to keep $A$ PSD one needs a field strength $h_{\mathrm{Zeeman}}\geq \|t\|_{\mathrm{op}}$.

Although this final example has a fully polarized product ground state, its inclusion highlights the unifying scope of the fermionic Lee-Yang framework. 
The half-filled repulsive Hubbard model on a bipartite lattice admits sign-problem-free quantum Monte Carlo formulations, despite its generically nontrivial ground state (we elaborate further on this in the next section). In contrast, on a non-bipartite graph, a sufficiently strong uniform Zeeman field makes the ground state trivial \cite{StrackVollhardt1994PRL} without necessarily eliminating the sign problem in conventional auxiliary-field formulations \cite{Iglovikov2015PRB}. 
These examples are thus both computationally \textit{accessible} (although not necessarily provably efficient) for quite different reasons. 
The fermionic Lee–Yang framework places them within a common analytic class, revealing a shared structure beyond the case-by-case arguments underlying their numerical or analytical accessibility.

\subsection{\label{sec:sign_problem}Relation to the fermionic sign problem}

It is well known that the attractive Hubbard and repulsive bipartite (half-filled) Hubbard models, assuming mild symmetries on the hopping matrix $t$, admit a sign-problem-free (SPF) representation through a quantum Monte Carlo method called determinant Monte Carlo \cite{hirsch1983discrete,hirsch1985two}. This SPF property alone does not imply efficient classical simulability, and it is also understood differently from the usual notion of stoquasticity in quantum spin Hamiltonians \cite{bravyi2006complexity}. Instead, it arises from applying the Hubbard-Stratonovich transformation to a Trotterized partition function,
\begin{equation}
    \ZZ = \sum_{\phi} w(\phi) \, \mathrm{Tr}\left[ \, \prod_{\ell} e^{K_\ell(\phi)} \, \right], \quad w(\phi) \geq 0.
\end{equation}
Here, $\phi$ runs over configurations of classical spins called an auxiliary field, and each exponential is generated by a quadratic operator
\begin{equation}
    K_\ell(\phi) := \sum_{ij} [A_\ell(\phi)]_{ij} a_i^\dagger a_j + \frac{1}{2} \sum_{ij} \left( [B_\ell(\phi)]_{ij} a_i^\dagger a_j^\dagger + [B_\ell(\phi)]_{ij}^* a_j a_i \right) - \frac{1}{2} \mathrm{Tr}\,A_\ell(\phi)
\end{equation}
for some $\phi$-dependent complex matrices $A_\ell(\phi), B_\ell(\phi)$. In other words, these methods reduce the partition function of an interacting fermionic system to a classical sum, which may then be estimated via Monte Carlo. The weights are given by tracing out free-fermion operators, which can be computed efficiently as determinants involving the $A$- and $B$-type matrices. The SPF property, then, corresponds to when all such determinants are non-negative.

There are numerous sufficient, often overlapping, conditions that yield such an SPF representation \cite{li2019sign}. The class of Hamiltonians in \cref{def:lee-yang} turns out to have a nontrivial intersection with these conditions, and we give one illuminating example via Majorana reflection positivity \cite{wei2016majorana}. Below, we drop the dependence on the auxiliary field $\phi$ and the Trotter layer $\ell$, as the conditions we discuss are demanded to hold for every $\phi$ and $\ell$.

\paragraph{Majorana reflection positivity.}
One broad SPF condition was identified in Ref.~\cite{wei2016majorana}, encompassing both the attractive and repulsive bipartite Hubbard models with real hopping. Known as Majorana reflection positivity (MRP), it subsumed all SPF fermionic Hamiltonians that were known at the time of its introduction, including the split orthogonal group class \cite{wang2015split} and the Majorana class within Majorana time reversal (MTR) symmetry \cite{li2016majorana}.

As the name implies, MRP is naturally phrased in terms of Majorana operators; we re-express the condition in terms of the ladder operators, keeping in mind that there is a Bogoliubov degree of freedom in how to define the Majorana fermions. While MRP does not require the generator $K$ to be Hermitian, we assume so here for simplicity. If all matrices $A, B$ appearing in the decomposition of $\ZZ$ obey the following properties:
\begin{equation}\label{eq:MRP_condition_intro}
    A = A^T \succeq 0 \text{ real}, \qquad B = -B^T \text{ imaginary},
\end{equation}
then the Hamiltonian satisfies MRP and is therefore SPF \cite{wei2016majorana}. Comparing this to the constraints on \cref{eq:free_fermion_and_interactions}, we see that MRP intersects with the fermionic Lee-Yang class at a purely real/purely imaginary slice. It is remarkable, however, that the PSD condition of MRP precisely coincides with that in \cref{def:lee-yang}.

One may observe that the attractive Hubbard model with complex hoppings evades \cref{eq:MRP_condition_intro}, yet those satisfying $t_{ij,\uparrow} = t_{ij,\downarrow}^*$ easily yield an SPF decomposition. This is because such Hamiltonians are in the Kramers class of MTR \cite{li2016majorana}, which lies outside of MRP. The generalization to spin-orbit couplings, as in \cref{eq:hopping_spin_orbit} with the time-reversal symmetry, also lies in this Kramers class.

\paragraph{Potential for quantum advantage.}
Crucially, there are many fermionic Lee-Yang Hamiltonians for which no SPF representation is known. For example, this occurs in repulsive bipartite Hubbard models with complex hopping amplitudes $t_{ij}$. A canonical example is the interacting Hofstadter model \cite{tai2017microscopy,andrews2018stability}, which at half filling is a spinless version of \cref{eq:repulsive_bipartite_hubbard} with $t_{ij} = -e^{i \theta_{ij}}$. The phases $\theta_{ij}$ are chosen to satisfy some conserved flux relation; for example on a square lattice, the phases in each plaquette sum to $2\pi/3$. These nontrivial phases result in complex-valued determinants appearing in $\ZZ$, thereby obstructing any known route to an SPF representation. As such, modern investigations of interacting Hofstadter models typically employ either DMRG methods, which even in the gapped phase exhibits exponential computational cost in the width of the two-dimensional lattice \cite{motruk2016density}, or otherwise exact diagonalization on small sizes \cite{funfhaus2024topological}. In contrast, by \cref{thm:spectral_gap} we can now show that there is a quantum algorithm to study such models with only polynomial cost.

\section{\label{sec:free-fermions}Lee-Yang theorem for free fermions}
We first prove a Lee-Yang theorem for free-fermion Hamiltonians $H_0$ with $A \succeq 0$, as defined in \cref{eq:free_fermion}. Then in \cref{sec:interacting} we will extend the theorem to include the interactions. In both cases, the theorem is proved with respect to a uniform onsite potential $h\sum_{i} n_i$. Similar to the circle theorem for spin Hamiltonians \cite{asano1970,suzuki1971}, the theorem for fermions is derived for the Trotterized partition function.

Let $N$ be the total number of fermionic modes, $M$ the number of Trotter layers, and define $\epsilon := \beta/M$. Consider the following polynomial
\begin{align}
    \label{eq:free_fermion_partition_function}
    \ZZ_{\epsilon}(H_0, \bm{u}) := \mathrm{Tr}\left[\,\prod_{m\in \Z_M} \, \left[e^{-\epsilon H_0}\, \prod_{i \in \Z_N} \left(\I-n_i + u_{i,m}\,n_i\right) \right]\,\right] = \mathrm{Tr}\left[\,\prod_{m} \, \left[e^{-\epsilon H_0}\, \prod_{i} e^{n_i 
\log(u_{i,m})} \right]\,\right]
\end{align}
as a function of the complex variables $\{u_{i,m}: i\in \Z_N, m\in \Z_M\}$. At $\bm{u} = e^{-\epsilon h} \bm{1},$ the above polynomial evaluates to the Trotterized partition function of $H_0 +h\sum_{i} n_i$:
\begin{align}
    \ZZ_\epsilon(H_0, \bm{u} = e^{-\epsilon h} \bm{1}) \,&=\, \mathrm{Tr}\left[\,\prod_{m} \, \left[e^{-\epsilon H_0}\, \prod_{i} e^{-\epsilon h n_i} \right]\,\right] \,\approx \, \mathrm{Tr}\left[\exp\left(-\beta\bigg(H_0 +h\sum_{i} n_i \bigg)\right)\right]
\end{align}
where the approximation is controlled by Trotter error bounds \cite{childs2021}. In the rest of the paper, we also refer to the polynomials of the form $\ZZ_\epsilon(H_0, \bm{u})$ as Trotterized partition functions.

\begin{theorem}
    \label{thm:zero-free_free_fermion}
    In the open unit polydisc, i.e., $|u_{i,m}| < 1$ for all $i$ and $m$, the Trotterized partition function $\ZZ_{\epsilon}(H_0, \bm{u})$, defined in \cref{eq:free_fermion_partition_function}, is nonzero for any free-fermion Hamiltonian $H_0$ defined in \cref{eq:free_fermion_and_interactions}. 
\end{theorem}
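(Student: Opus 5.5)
The plan is to turn the trace into a determinant (or Pfaffian) of single-particle matrices, and then show that this matrix cannot be singular because every factor acts as a contraction. I would do this in three steps, handling the number-conserving case $B=0$ first as a warm-up.

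\emph{Step 1 (the case $B=0$).} Each factor $\prod_i(\I-n_i+u_{i,m}n_i)=\exp\bigl(\sum_i \log(u_{i,m})\, n_i\bigr)$ is a Gaussian operator whose single-particle matrix is $D_m:=\mathrm{diag}(u_{1,m},\dots,u_{N,m})$. By multiaffinity and continuity, this also covers $u_{i,m}=0$. Likewise $e^{-\epsilon H_0}$ has single-particle matrix $P:=e^{-\epsilon A}$. The standard trace formula for products of number-conserving Gaussian operators then gives
\begin{equation*}
\ZZ_\epsilon(H_0,\bm u)=\det\Bigl(\I+\prod_{m} P D_m\Bigr).
\end{equation*}
Since $A\succeq 0$ we have $\|P\|_{\mathrm{op}}\le 1$, and $\|D_m\|_{\mathrm{op}}<1$ in the open polydisc. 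Hence the product has operator norm strictly less than $1$, so it has no eigenvalue $-1$, and the determinant is nonzero.

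\emph{Step 2 (general $B$).} Here the naive Bogoliubov--de Gennes route fails. The BdG matrix $\begin{pmatrix}A & B\\ -B^* & -A^*\end{pmatrix}$ is not PSD, and the field enters as $\mathrm{diag}(D_m, D_m^{-1})$, so the relevant product is not a contraction. I would instead write the trace as a fermionic coherent-state (Grassmann) path integral over $M$ time slices with antiperiodic boundary conditions. This gives
\begin{equation*}
\ZZ_\epsilon=\mathrm{Pf}(\mathcal K)\quad\text{or}\quad \ZZ_\epsilon^2=\det(\mathcal K),
\end{equation*}
where $\mathcal K$ is a $2NM\times 2NM$ block-cyclic matrix. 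Its diagonal blocks are identity-like (from the overlaps), and its off-diagonal blocks are built from $D_m$ and from the normal-ordered Grassmann kernel of $e^{-\epsilon H_0}$. That kernel is a Gaussian in the Grassmann variables: a linear term governed by a matrix $\tilde P$ mixing $\bar\xi$ and $\xi$, plus pairing terms $\tilde B$ and $\tilde B'$ in $\bar\xi\bar\xi$ and $\xi\xi$.

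\emph{Step 3 (nonsingularity).} The aim is to show that $\mathcal K$ is invertible for $|u_{i,m}|<1$, by exhibiting it as $\I-\mathcal C$ with $\|\mathcal C\|_{\mathrm{op}}<1$. An equivalent target is that $\mathcal K$, suitably symmetrised, has positive definite Hermitian part. The needed input is that the $2N\times 2N$ single-particle block describing the kernel of $e^{-\epsilon H_0}$, namely $\begin{pmatrix}\tilde P & \tilde B\\ \tilde B' & \cdot\end{pmatrix}$, is a contraction. I would derive this from the fact that $e^{-\epsilon H_0}$ has operator norm at most $1$ on Fock space. Indeed, its Hermitian generator satisfies $H_0\ge \min$-eigenvalue shifted by $\tfrac12\mathrm{Tr}A$, and it is $A\succeq0$ that prevents the vacuum energy shift from growing the norm. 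Multiplying by the strict contractions $D_m$ then makes $\mathcal C$ a strict contraction, so $\det\mathcal K\ne 0$ and therefore $\ZZ_\epsilon\ne0$.

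\emph{Main obstacle and fallback.} The hard step is Step 3 in the presence of pairing: identifying exactly which normalization of the Grassmann kernel turns PSD-ness of $A$ into a contraction bound. If a direct bound proves awkward, I would use a homotopy. Along the path $B\to tB$ with $t\in[0,1]$, the zero-free region should persist, by Hurwitz's theorem, provided no zero enters through the boundary $|u_{i,m}|=1$. I would then check that boundary condition by an Asano-type contraction argument applied to the multiaffine polynomial $\ZZ_\epsilon$.
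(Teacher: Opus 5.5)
Your Step 1 is correct, but the argument breaks exactly where pairing enters. The only justification you give for the contraction in Step 3 is that $\|e^{-\epsilon H_0}\|_{\mathrm{op}}\le 1$ on Fock space. This is false for every $B\neq 0$.

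Already for $N=2$ and $A=0$, the Hamiltonian $H_0=b\,a_1^\dagger a_2^\dagger+\bar b\,a_2a_1$ has eigenvalue $-|b|$. So $\|e^{-\epsilon H_0}\|_{\mathrm{op}}=e^{\epsilon|b|}>1$, and dividing by the vacuum amplitude $\cosh(\epsilon|b|)$ does not help. In general the ground energy of $H_0$ is $\frac12\mathrm{Tr}A-\frac14\mathrm{Tr}|Q|$. Moreover $\mathrm{Tr}|Q|\ge\mathrm{Tr}(\Gamma Q)=2\,\mathrm{Tr}A$ for $\Gamma=\mathrm{diag}(\I_N,-\I_N)$, with equality only if $[\Gamma,Q]=0$, i.e.\ $B=0$. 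So PSD-ness of $A$ never compensates the pairing-induced drop of the vacuum energy. Nor is the single-particle propagator $W=e^{-\epsilon Q}$ a Euclidean contraction, since its eigenvalues are $e^{\pm\epsilon\lambda}$.

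Step 3 is the whole content of the theorem beyond the number-conserving case, and it has no valid source for its key bound. The homotopy fallback does not supply one. Hurwitz's theorem only shows that the set of $t$ for which the polynomial is zero-free on the open polydisc is closed. Openness requires excluding zeros that enter through the boundary of the polydisc, which is as hard as the original statement, and an Asano contraction does not address it.

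The missing idea is an indefinite metric, and with it the BdG route you discarded works directly. Because the pairing blocks of $Q$ anticommute with $\Gamma$, we have $\{\Gamma,Q\}=2\,\mathrm{diag}(A,A^*)\succeq 0$. Hence $\partial_\epsilon(W\Gamma W)=-W\{\Gamma,Q\}W\preceq 0$ and $W\Gamma W\preceq\Gamma$, so $W$ is a $\Gamma$-contraction. This is precisely where $A\succeq 0$ is used. The field matrix $\mathrm{diag}(D_m,D_m^{-1})$ you objected to is then a strict $\Gamma$-contraction: $|u|<1$ shrinks the positive-norm components and $|u|^{-1}>1$ inflates the negative-norm ones.

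The paper clears the inverses by writing $\ZZ_\epsilon^2=e^{-\beta\,\mathrm{Tr}A}\det\mathfrak D$. Here $\mathfrak D$ is block-cyclic with diagonal blocks $F_m=\mathrm{diag}(\I,D_m)$ and off-diagonal blocks $\pm WE_{m+1}$, where $E_m=\mathrm{diag}(D_m,\I)$. For a null vector, take $\langle\cdot,\Gamma\,\cdot\rangle$ of both sides of $F_m\phi_m=\pm WE_{m+1}\phi_{m+1}$ and use $W\Gamma W\preceq\Gamma$ on the right. Summing around the cycle gives $\sum_{m,j}(1-|u_{j\bmod N,\,m}|^2)|\phi_{m,j}|^2\le 0$, hence $\phi=0$.

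If you want to keep your Grassmann formulation, the contraction you need for the normal-ordered kernel should come from this $\Gamma$-contraction property. Up to conventions, the kernel parameters of $e^{-\epsilon H_0}$ are the Potapov--Ginzburg transform of $W$, with blocks $W_{22}^{-1}$, $W_{12}W_{22}^{-1}$, $-W_{22}^{-1}W_{21}$ and $W_{11}-W_{12}W_{22}^{-1}W_{21}$. That transform maps $\Gamma$-contractions to ordinary contractions, but the bound does not follow from any Fock-space norm estimate.
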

\begin{proof}
    Let 
    \begin{align}
        Q :=
        \begin{pmatrix}
            A & B \\
            B^{\dagger} & -A^*
        \end{pmatrix}
        \quad \text{such that} \quad Q = Q^{\dagger},
    \end{align}
    and rewrite the free-fermion Hamiltonian from \cref{eq:free_fermion} as
    \begin{align}
        H_0 = \frac{1}{2}\bm{\psi}^{\dagger} Q \bm{\psi} + \frac{1}{2} \mathrm{Tr}(A) \quad \text{where} \quad \bm{\psi}^\dagger := (a_0^\dagger,\dots,a_{N-1}^\dagger,a_0,\dots,a_{N-1}).
    \end{align}
    Let $W := e^{-\epsilon Q}$, $E_m \,:=\, \diag\big(\{u_{i,m}\}_{i}, \bm{1}_N\big)$, $F_m \,:=\, \diag\big(\bm{1}_N, \{u_{i,m}\}_{i}\big)$, and $G_m := E_m F^{-1}_m$. Rewrite the square of the partition function as
    \begin{align}
        \ZZ_\epsilon(H_0,\bm{u})^2 &\,=\, e^{-\beta \,\mathrm{Tr}(A)}\left(\prod_{i,m} u_{i,m} \right)\mathrm{Tr}\left[\rule{0cm}{0.85cm}\,\prod_{m}\, \left[e^{- \frac{1}{2}\epsilon\,\bm{\psi}^{\dagger} Q \bm{\psi}}\, e^{\frac{1}{2}\bm{\psi}^{\dagger} \log(G_m) \bm{\psi}} \right]\,\right]^2 \\[6pt]
        &\,=\, e^{-\beta \,\mathrm{Tr}(A)}\left(\prod_{m} \det(F_m) \right) \det\left[\I + \prod_{m}W G_{m}\right].
    \end{align} 
    The determinant of $\I$ plus a product of matrices can be written as the determinant of a single block matrix\footnote{This follows from Schur complement: the top-left $(M-1)\times(M-1)$ block of $\mathfrak{B}$ is block-triangular with unit diagonal, and eliminating it leaves exactly $\det[\I + \prod_{m \in \Z_M}WG_m]$.}
    \begin{align}
        \det\left[\I + \prod_{m}W G_m\right] \,=\, \det \mathfrak{B} \quad \text{where} \quad \mathfrak{B} \,:=\,
        \begin{pmatrix}
            \I      & -W G_1 &        &       \\
                    & \I     & \ddots &      \\
                    &        & \ddots & -W G_{M-1}     \\
            W G_0   &        &        & \I
        \end{pmatrix},
    \end{align}
    with all the unspecified blocks equal to zero. Multiplying with $\prod_{m} \det(F_m)$, we get
    \begin{align}
        \label{eq:pencil_representation}
        \ZZ_\epsilon(H_0,\bm{u})^2  = e^{-\beta \,\mathrm{Tr}(A)}\det \mathfrak{D} \quad \where \quad 
        \mathfrak{D} = \begin{pmatrix}
            F_0     & -W E_1 &        &       \\
                    & F_1    & \ddots &      \\
                    &        & \ddots & -W E_{M-1}     \\
            W E_0   &        &        & F_{M-1}
        \end{pmatrix}.
    \end{align}
    Although the derivation above assumes $u_{i,m}\neq 0$ for all $i,m$,
    Eq.~\eqref{eq:pencil_representation} remains valid when some variables vanish,
    since both sides are polynomials in $\bm{u}$.

    We now show that the matrix $\mathfrak{D}(\bm u)$ is non-singular everywhere in the open unit polydisc. Suppose $\mathfrak{D}(\bm u)$ is singular and that $\phi = (\phi_0, \dots, \phi_{M-1})$ is a corresponding null vector. Then the structure of the matrix implies
    \vspace{-0.3cm}
    \begin{align}
        \label{eq:balance_equation}
        F_{m}\phi_{m} \,=\, \pm\, W E_{m+1}\, \phi_{m+1} \quad \text{for all }\, m
    \end{align}
    where the index $m$ is taken mod $M$ and the minus sign occurs only at $m = M-1$. Define a diagonal matrix $\Gamma := \diag( +\bm{1}_N,\, -\bm{1}_N)$, which satisfies $\{\Gamma, Q\} = 2\diag(A, A^*) \succeq 0$.
    Since $\partial_{\epsilon} W\Gamma W = -W\{\Gamma,Q\}W \preceq 0$ for all $\epsilon \in \R$, this implies $W \Gamma W \preceq \Gamma$. With $\Gamma$, define a quadratic form
    \begin{align}
        \braket{\psi, \psi}_{\Gamma} \,:=\, \braket{\psi, \Gamma \psi} \,=\, \sum_{j\in \Z_N} |\psi_j|^2 - \sum_{j\in \Z_{2N}\backslash \Z_N} |\psi_j|^2.
    \end{align}
    The quadratic form is monotonically non-increasing under the action of $W$ since
    \begin{align}
        \braket{W\psi, W\psi}_{\Gamma} \,=\,  \braket{W\psi, \Gamma W\psi} \,=\, \braket{\psi, W \Gamma W \psi} \,\leq\, \braket{\psi, \Gamma\psi} = \braket{\psi,\psi}_{\Gamma},
    \end{align} 
    where we used $W^{\dagger} = W$ since $Q$ is Hermitian. Taking the quadratic form of both sides of Eq.~\eqref{eq:balance_equation}, the sign squares away and we get
    \begin{align}
        \sum_{j\in \Z_N} |\phi_{m,j}|^2 \,-\, \sum_{j\in \Z_{2N}\backslash\Z_N} |u_{j-N,m}|^2|\phi_{m,j}|^2 \;\leq\;  \sum_{j\in \Z_N} |u_{j,m+1}|^2|\phi_{m+1,j}|^2 \,-\, \sum_{j\in \Z_{2N}\backslash\Z_N} |\phi_{m+1,j}|^2
    \end{align}
    Putting these together by summing over $m \in \Z_M$ and shifting the summation index $m \to m-1$ on the right-hand side, all the terms combine into
    \begin{align}
        \sum_{m\in \Z_M}\, \sum_{j\in \Z_{2N}} \left(1 - |u_{j (\mathrm{mod}\ N),\ m}|^2\right) |\phi_{m,j}|^2 \;\leq\; 0.
    \end{align}
    In the open unit polydisc, every coefficient $1 - |u_{j,m}|^2$ is strictly positive, which implies $\phi_m = 0$ for all $m$, i.e., $\phi = 0$. Hence $\mathfrak{D}(\bm{u})$ is non-singular, and by Eq.~\eqref{eq:pencil_representation}, $\ZZ_\epsilon(H_0,\bm{u})^2 = e^{-\beta \,\mathrm{Tr}(A)}\det \mathfrak{D}(\bm{u}) \neq 0$ everywhere in the open unit polydisc.
\end{proof}

\section{\label{sec:interacting}Lee-Yang theorem for interacting fermions}

In this section, using Asano contractions \cite{asano1970,ruelle1971extension}, we extend the theorem from free fermions to the interacting Hamiltonians of \cref{eq:free_fermion_and_interactions}. Consider the following polynomial
\begin{align}
    \label{eq:interacting_fermion_partition_function}
    \ZZ_\epsilon(H, \bm{u}) &:= \mathrm{Tr}\left[\,\prod_{m} \, \left[e^{-\epsilon H_0}\, e^{-\epsilon V} \, \prod_{i} \left(\I-n_i + u_{i,m}\,n_i\right) \right]\,\right].
\end{align}
Compared to the definition of the $\ZZ_\epsilon(H_0, \bm{u})$ in \cref{eq:free_fermion_partition_function}, we are abusing notation to define the polynomial $\ZZ_\epsilon(H, \bm{u})$ where we are splitting the exponential $e^{-\epsilon H}$ inside the trace into $e^{-\epsilon H_0}\, e^{-\epsilon V}$ even though $H_0$ and $V$ do not necessarily commute. 

To understand the zero-freeness properties of $\ZZ_\epsilon(H, \bm{u})$, first consider the polynomial defined with just the interaction Hamiltonian $V$:
\vspace{-0.2cm}
\begin{align}
    \label{eq:classical_partition_function}
    \ZZ_\epsilon(V, \bm{u}) &:= \prod_{m} \, \mathrm{Tr}\left[ e^{-\epsilon V} \, \prod_{i} \left(\I-n_i + u_{i,m}\,n_i\right) \right].
\end{align}
This is equivalent to a product of partition functions of a classical ferromagnetic Ising model. Notice the change in the order of the trace and the product compared to $\ZZ_\epsilon(H_0, \bm{u})$ and $\ZZ_\epsilon(H, \bm{u})$. The polynomial $\ZZ_\epsilon(V, \bm{u})$ is famously known to be nonzero on the open unit polydisc, which is precisely the content of the classical Lee-Yang theorem.
\begin{lemma}[\cite{lee1952}]
    \label{thm:zero-free_classical_Ising}
    In the open unit polydisc, the classical partition function $\ZZ_\epsilon(V, \bm{u})$ is nonzero for any classical ferromagnetic interaction $V$.
\end{lemma}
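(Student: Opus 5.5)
Since $V$ is diagonal in the occupation-number basis, each factor in $\ZZ_\epsilon(V,\bm u)$ reduces to a classical sum. For a fixed $m$, writing $n_i = \sigma_i \in \{0,1\}$, I would expand
\begin{equation*}
    \mathrm{Tr}\Big[e^{-\epsilon V}\prod_i(\I-n_i+u_{i,m}n_i)\Big] = \sum_{\sigma\in\{0,1\}^N} \exp\Big(\epsilon\sum_{i<j}|U_{ij}|(\sigma_i-\tfrac12)(\sigma_j-\tfrac12)\Big)\prod_{i:\sigma_i=1}u_{i,m}.
\end{equation*}
Up to an overall positive constant, this is the Lee--Yang polynomial $P(\bm z)=\sum_{S\subseteq[N]}\prod_{i\in S,\,j\notin S}a_{ij}\prod_{i\in S}z_i$. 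Here $a_{ij}=e^{-\epsilon|U_{ij}|/2}\in(0,1]$, and I would use the identity $(\sigma_i-\tfrac12)(\sigma_j-\tfrac12)=\tfrac14-\tfrac12[\sigma_i\neq\sigma_j]$. The product over $m$ of such factors vanishes only if one of them vanishes. It therefore suffices to prove the standard statement: if $a_{ij}=a_{ji}\in[0,1]$ (here in fact $\in(0,1]$), then $P(\bm z)\neq 0$ whenever all $|z_i|<1$.

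I would prove this by Asano contraction, which is also the technique used later in the paper. First, the two-site polynomial $P_{ij}(z_i,z_j)=1+a(z_i+z_j)+z_iz_j$ with $a\in[0,1]$ has no zeros when $|z_i|,|z_j|<1$. To see this, solve for $z_j=-(1+az_i)/(a+z_i)$. For $|z_i|<1$ one has $|1+az_i|^2-|a+z_i|^2=(1-a^2)(1-|z_i|^2)\ge 0$, so $|z_j|\ge1$, and this still holds when $a=1$.

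Second, I would form the product $\prod_{i<j}P_{ij}$ with independent copies $z_i^{(j)}$ of each variable, one per bond. A multiaffine polynomial is nonvanishing on the open polydisc, and for each $i$ I would contract all the copies $z_i^{(j)}$ into a single $z_i$ using Asano's lemma. The lemma says that if $\alpha+\beta z_1+\gamma z_2+\delta z_1z_2$ is nonzero for $|z_1|,|z_2|<1$, then so is $\alpha+\delta z$ for $|z|<1$. The contraction keeps only the terms in which all copies of $i$ are simultaneously present or simultaneously absent. This reproduces exactly $P(\bm z)$, because each bond contributes $1$ if both endpoints lie on the same side and $a_{ij}$ otherwise.

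The one step requiring care is the proof of Asano's lemma itself. Nonvanishing at $z_1=z_2=0$ gives $\alpha\ne0$. Writing the zero set as $z_2=-(\alpha+\beta z_1)/(\gamma+\delta z_1)$, which must have modulus at least $1$ for every $|z_1|<1$, I would derive $|\delta|\le|\alpha|$ by a maximum-modulus or Möbius argument, and conclude $|{-\alpha/\delta}|\ge1$. Degenerate cases such as $\gamma+\delta z_1\equiv0$ need separate handling. Everything else is bookkeeping, since this is the classical Lee--Yang circle theorem and could simply be cited as in the statement.
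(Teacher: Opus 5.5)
Your proof is correct. The paper gives no argument of its own for this lemma; it simply cites Lee and Yang. So your proposal is a self-contained replacement for a citation, not a reconstruction of a proof in the text.

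\textbf{The reduction.} This part is sound.
\begin{itemize}
\item $\ZZ_\epsilon(V,\bm u)$ factorizes over Trotter slices into traces of diagonal operators.
\item The identity $(\sigma_i-\tfrac12)(\sigma_j-\tfrac12)=\tfrac14-\tfrac12[\sigma_i\neq\sigma_j]$ turns each factor into $\prod_{i<j}e^{\epsilon|U_{ij}|/4}$ times $P(\bm z)$, with $a_{ij}=e^{-\epsilon|U_{ij}|/2}\in(0,1]$.
\item Contracting all copies of each vertex variable reproduces $P$ exactly.
\end{itemize}

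\textbf{Comparison of routes.} Lee and Yang's original argument is a more intricate induction over sites. Your Asano-contraction route is shorter. It also runs on the same machinery the paper already uses for the Schur-product lemma, whose proof the paper describes as Asano contractions. With your proof included, the whole interacting-fermion argument would rest on a single tool. The citation only buys brevity.

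\textbf{Small points to tighten.}
\begin{itemize}
\item In the two-site bound you divide by $a+z_i$. When $a<1$ and $z_i=-a$, the polynomial is the nonzero constant $1-a^2$, so nothing is lost, but you should say so. The case $N=1$, where $P=1+z$, is trivial and should also be noted.
\item Your sentence ``A multiaffine polynomial is nonvanishing on the open polydisc'' should read: the product over bonds, being a product of nonvanishing factors in disjoint variables, is nonvanishing there.
\item During contraction, the coefficients $\alpha,\beta,\gamma,\delta$ are polynomials in the remaining variables. You apply Asano's lemma pointwise, for each fixed value of those variables in the open polydisc.
\end{itemize}

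\textbf{Closing the sketched step $|\delta|\le|\alpha|$.} This is the only step you leave as a sketch, and it can be done without M\"obius maps or degenerate cases.
\begin{itemize}
\item Fix $|z_1|=r<1$. The affine map $z_2\mapsto(\alpha+\beta z_1)+(\gamma+\delta z_1)z_2$ is nonvanishing on the disc, which forces $|\alpha+\beta z_1|\ge|\gamma+\delta z_1|$.
\item Square both sides, average over the circle $|z_1|=r$, and let $r\to1$. This gives $|\alpha|^2+|\beta|^2\ge|\gamma|^2+|\delta|^2$.
\item Swapping the roles of $z_1$ and $z_2$ gives $|\alpha|^2+|\gamma|^2\ge|\beta|^2+|\delta|^2$.
\item Adding the two inequalities yields $|\delta|\le|\alpha|$.
\item Combined with $\alpha\neq0$, this gives $|\alpha+\delta z|\ge|\alpha|(1-|z|)>0$ on the open disc. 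That is exactly Asano's lemma for the unit disc.
\end{itemize}
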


We now show that combining the zero-freeness property of $\ZZ_\epsilon(H_0, \bm{u})$ (\cref{thm:zero-free_free_fermion}) with that of $\ZZ_\epsilon(V, \bm{u})$ (\cref{thm:zero-free_classical_Ising}) implies the same for $\ZZ_\epsilon(H, \bm{u})$. We need the following lemma as an intermediate step.

\begin{lemma}[\cite{hinkkanen1997}]
    \label{thm:stability_of_schur_product}
    For each $X \subseteq \{1,2,\ldots,k\}$, denote $z^{X} = \prod_{i \in X} z_{i}$ where $z_i$ are complex variables. Suppose that the polynomials $a(\bm{z}) = \sum_{X} a_{X} z^{X}$ and $b(\bm{z}) = \sum_{X} b_{X} z^{X}$ are nonzero on the open unit polydisc. Then the Schur product of the polynomials $a(\bm{z})$ and $b(\bm{z})$,
    \begin{align}
        a(\bm{z}) \,\odot\,b(\bm{z}) \,:=\,  \sum_{X} a_{X} b_{X} z^{X}
    \end{align}
    is also nonzero on the open unit polydisc.
\end{lemma}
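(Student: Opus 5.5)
The plan is to realize the Schur product as the result of repeated \emph{Asano contractions} applied to the product polynomial $a(\bm z)\,b(\bm w)$ in $2k$ independent variables. Since $a$ and $b$ are each nonzero on the open unit polydisc, so is
\begin{align}
P(\bm z,\bm w) := a(\bm z)\, b(\bm w) = \sum_{X,Y} a_X b_Y\, z^X w^Y ,
\end{align}
as a function of $(\bm z,\bm w)$ in the open unit polydisc of $\mathbb{C}^{2k}$. Moreover $P$ is multiaffine in every variable.

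We then contract the pairs $(z_i,w_i)$ one at a time, for $i=1,\dots,k$. Contracting the pair $(z,w)=(z_i,w_i)$ means writing the current polynomial as $c_0+c_1 z+c_2 w+c_3 zw$ and replacing it by $c_0+c_3 z$. Here each $c_j$ is a multiaffine polynomial in the remaining variables. Contraction keeps exactly the monomials in which $z_i$ and $w_i$ both appear or both are absent, and renames $z_iw_i\mapsto z_i$. Therefore, after all $k$ contractions, the only surviving terms of $P$ are those with $X=Y$, and the result is exactly $\sum_X a_Xb_X z^X = a\odot b$. So it suffices to show that a single contraction preserves nonvanishing on the open unit polydisc.

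The key step is a two-variable Asano lemma. Fix all the other variables at arbitrary points of the open unit disc. This makes $c_0,\dots,c_3$ complex numbers, and by hypothesis $f(z,w)=c_0+c_1z+c_2w+c_3zw\neq0$ for $|z|,|w|<1$. We must show that $c_0+c_3\zeta\neq0$ for $|\zeta|<1$. Setting $z=w=0$ gives $c_0\neq 0$. To control $c_3$, restrict to the diagonal $z=w=\zeta$. The quadratic $q(\zeta)=c_0+(c_1+c_2)\zeta+c_3\zeta^2$ is then nonzero on the open disc.
\begin{itemize}
\item If $c_3=0$, then $c_0+c_3\zeta=c_0\neq 0$ and we are done.
\item If $c_3\neq0$, both roots of $q$ have modulus $\ge 1$, so their product satisfies $|c_0/c_3|\ge1$. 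Hence $|c_3\zeta|<|c_3|\le|c_0|$ for $|\zeta|<1$, which gives $c_0+c_3\zeta\neq0$.
\end{itemize}
Since the fixed values of the other variables were arbitrary in the disc, the contracted polynomial is nonzero on the full open polydisc in its remaining variables.

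Finally, we check that the induction is well-posed. After each contraction the polynomial is still multiaffine in all surviving variables, since the new $z_i$ appears at most linearly. It is also nonzero on the open polydisc by the step above, so the next contraction can be applied. I expect the main subtlety to be the Asano step, specifically getting $|c_3|\le|c_0|$ rather than a weaker bound. The diagonal restriction together with Vieta's formula handles this cleanly, and it needs only the open-disc hypothesis. Everything else is bookkeeping of which monomials survive the contractions.
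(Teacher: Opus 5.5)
Your proposal is correct and follows the same route the paper indicates: you form $a(\bm{x})\,b(\bm{y})$, which is nonvanishing on the open polydisc, and Asano-contract each pair $(x_i,y_i)\to z_i$, so that only the terms with $X=Y$ survive. Your diagonal-restriction and Vieta argument for the single-pair step (giving $c_0\neq 0$ and $|c_3|\le|c_0|$) is the standard proof of the Asano lemma, and it supplies the detail the paper leaves to \cite{hinkkanen1997}.
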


The proof of \cref{thm:stability_of_schur_product} works essentially by taking the conventional product of the polynomials $a(\bm{x})$ and $b(\bm{y})$, and using Asano contractions \cite{asano1970,ruelle1971extension} to contract the pairs of variables $(x_i, y_i)$ into $z_i$ for all $i$.

\begin{theorem}
    \label{thm:unit_disk_zero_free_interacting_fermions}
    In the open unit polydisc, the Trotterized partition function $\ZZ_\epsilon(H, \bm{u})$, defined in \cref{eq:interacting_fermion_partition_function}, is nonzero for any interacting fermionic Hamiltonian defined in \cref{eq:free_fermion_and_interactions}.
\end{theorem}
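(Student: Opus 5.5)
The plan is to realize $\ZZ_\epsilon(H,\bm u)$ as the Schur product $\ZZ_\epsilon(H_0,\bm u)\odot\ZZ_\epsilon(V,\bm u)$ in the variables $\{u_{i,m}\}$. Once that identity holds, \cref{thm:zero-free_free_fermion}, \cref{thm:zero-free_classical_Ising} and \cref{thm:stability_of_schur_product} give the claim immediately. The key structural fact is that $V$ and $\prod_i(\I-n_i+u_{i,m}n_i)$ are both diagonal in the occupation-number (Fock) basis $\{\ket{\bm n}:\bm n\in\{0,1\}^N\}$ built with a fixed mode ordering.

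\textbf{Step 1: expansion in the Fock basis.} Write $D_m(\bm u):=\prod_i(\I-n_i+u_{i,m}n_i)$. Since $n_i^2=n_i$, we have $D_m\ket{\bm n}=\bm u_m^{\bm n}\ket{\bm n}$, where $\bm u_m^{\bm n}:=\prod_{i:n_i=1}u_{i,m}$. Inserting a resolution of the identity after every factor gives
\begin{align}
\ZZ_\epsilon(H_0,\bm u)=\sum_{\bm n^{0},\dots,\bm n^{M-1}}\ \prod_{m}\bra{\bm n^{m-1}}e^{-\epsilon H_0}\ket{\bm n^{m}}\ \prod_m \bm u_m^{\bm n^m},
\end{align}
with indices taken mod $M$ and with the Fock-basis matrix elements carrying all fermionic signs. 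The polynomial is therefore multi-affine in the variables $u_{i,m}$. The monomial $\prod_{(i,m)\in X}u_{i,m}$ determines the configuration $(\bm n^0,\dots,\bm n^{M-1})$ uniquely, namely $n^m_i=1$ iff $(i,m)\in X$. Its coefficient is therefore the single product $c_X:=\prod_m\bra{\bm n^{m-1}}e^{-\epsilon H_0}\ket{\bm n^m}$.

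\textbf{Step 2: the interacting and classical polynomials.} The same expansion applies to $\ZZ_\epsilon(H,\bm u)$, because $e^{-\epsilon V}D_m$ is again diagonal, with eigenvalue $e^{-\epsilon V(\bm n)}\bm u_m^{\bm n}$ on $\ket{\bm n}$. Here $V(\bm n)=-\sum_{i<j}|U_{ij}|(n_i-\tfrac12)(n_j-\tfrac12)$. The coefficient of the same monomial is therefore $c_X\prod_m e^{-\epsilon V(\bm n^m)}$.

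On the other side, $\ZZ_\epsilon(V,\bm u)=\prod_m\sum_{\bm n}e^{-\epsilon V(\bm n)}\bm u_m^{\bm n}$ is also multi-affine. Since distinct $m$ use disjoint variables, its coefficient of the same monomial is exactly $\prod_m e^{-\epsilon V(\bm n^m)}$. Hence $\ZZ_\epsilon(H,\bm u)=\ZZ_\epsilon(H_0,\bm u)\odot\ZZ_\epsilon(V,\bm u)$ coefficientwise, with $k=NM$ variables in \cref{thm:stability_of_schur_product}.

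\textbf{Step 3: conclusion, and the point needing care.} $\ZZ_\epsilon(H_0,\bm u)$ is nonzero on the open unit polydisc by \cref{thm:zero-free_free_fermion}. Each factor of $\ZZ_\epsilon(V,\bm u)$ is a ferromagnetic Ising partition function: under $n_i=(1+s_i)/2$ the couplings $|U_{ij}|$ are ferromagnetic and the constant shifts only contribute positive prefactors. The product is thus nonzero there by \cref{thm:zero-free_classical_Ising}. \cref{thm:stability_of_schur_product} then yields nonvanishing of $\ZZ_\epsilon(H,\bm u)$ on the open unit polydisc.

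The main obstacle is to ensure that the fermionic signs cause no trouble in Step 1. I expect this to be harmless, for two reasons. First, all signs live inside the matrix elements of $e^{-\epsilon H_0}$, which are common to both $\ZZ_\epsilon(H_0,\bm u)$ and $\ZZ_\epsilon(H,\bm u)$. Second, the inserted diagonal operators contribute only scalars. The remaining check is that the same Fock basis and ordering is used throughout, and that pairing terms, which change particle number, do not spoil the one-to-one correspondence between monomials and configurations. They do not: the correspondence depends only on the diagonal insertions, not on $H_0$.
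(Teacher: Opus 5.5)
Your proposal is correct and follows the paper's proof essentially step for step. Like the paper, you expand in the occupation-number basis and use that $e^{-\epsilon V}$ and the insertions $\prod_i(\I-n_i+u_{i,m}n_i)$ are simultaneously diagonal, which gives the coefficientwise identity $\ZZ_\epsilon(H,\bm u)=\ZZ_\epsilon(H_0,\bm u)\odot\ZZ_\epsilon(V,\bm u)$; you then conclude with \cref{thm:zero-free_free_fermion}, \cref{thm:zero-free_classical_Ising} and \cref{thm:stability_of_schur_product}, and your checks that the fermionic signs sit only in the shared matrix elements of $e^{-\epsilon H_0}$ and that each monomial fixes a unique configuration make explicit what the paper leaves implicit.
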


\begin{proof}
    Let $X$ be a subset of $\Z_N \times \Z_M$, and denote $u^{X}$ to be the monomial that is the product of all the $u_{i,m}$ with $(i,m) \in X$. Let 
    \begin{align}
        \ZZ_\epsilon(H_0,\bm{u}) \,=\, \sum_X p_X u^X, \quad \ZZ_\epsilon(V,\bm{u}) \,=\, \sum_X q_X u^X \quad \text{and} \quad \ZZ_\epsilon(H,\bm{u}) \,=\, \sum_X r_X u^X.
    \end{align}
    We claim that, for all $X$,
    \vspace{-0.4cm}
    \begin{align}
        r_X = p_X q_X.
    \end{align}
    To see this, expand the diagonal term $\prod_{i}(\I-n_i+u_{i,m}\,n_i)$ as $\sum_{S \subseteq \Z_N}\big(\prod_{i\in S}u_{i,m}\big)P_S$, where $P_S$ projects onto the basis state with the modes in $S$ occupied. 
    Since $V$ is diagonal in this basis, 
    \begin{align}
        e^{-\epsilon V}\prod_{i}(\I-n_i+u_{i,m}\,n_i) = \sum_{S \subseteq \Z_N}\left(\prod_{i\in S}u_{i,m}\right)e^{-\epsilon V (S)}P_S    
    \end{align}
    with $V(S)$ the classical interaction energy of the configuration $S$. 
    Every monomial $u^X$ in $\ZZ_\epsilon(H,\bm{u})$ therefore carries, relative to $\ZZ_\epsilon(H_0,\bm{u})$, the additional scalar factor 
    \begin{align}
        \prod_m e^{-\epsilon V(X_m)} \quad \where \quad X_m := \left\{i : (i,m)\in X\right\}
    \end{align}
    and by \cref{eq:classical_partition_function} this factor is exactly $q_X$.
    
    Therefore the polynomial $\ZZ_\epsilon(H,\bm{u})$ is a Schur product of the polynomials $\ZZ_\epsilon(H_0,\bm{u})$ and $\ZZ_\epsilon(V,\bm{u})$,
    \begin{align}
        \ZZ_\epsilon(H,\bm{u}) = \ZZ_\epsilon(H_0,\bm{u}) \,\odot\, \ZZ_\epsilon(V,\bm{u}).
    \end{align}
    By \cref{thm:stability_of_schur_product}, this implies that $\ZZ_\epsilon(H, \bm{u})$ is nonzero in the open unit polydisc.
\end{proof}

\subsection{Proof of \cref{thm:Lee-Yang_theorem,thm:spectral_gap}}

Given \cref{thm:unit_disk_zero_free_interacting_fermions}, substituting $\bm{u} = e^{-\epsilon h} \bm{1}$ into $\ZZ_\epsilon(H, \bm{u})$, defined in \cref{eq:interacting_fermion_partition_function}, implies that the partition function $\ZZ_\epsilon(h)$, where
\begin{align}
    \ZZ_\epsilon(h) := \ZZ_\epsilon(H, \bm{u} = e^{-\epsilon h} \bm{1}) \,&=\, \mathrm{Tr}\left[\,\prod_{m} \, \left[e^{-\epsilon H_0}\, e^{-\epsilon V}\, \prod_{i} e^{-\epsilon h n_i} \right]\,\right],
\end{align}
is zero-free when $\mathrm{Re}(h) > 0$. By Hurwitz's theorem, we also get that the non-Trotterized partition function $\ZZ(h)$, defined in \cref{eq:partition_function}, is zero-free when $\mathrm{Re}(h) > 0$.

To bound the zeros to the strip $\mathrm{Re}(h) \in [-\lambda_{\max}(A), 0]$, apply the transformation $a \rightarrow a^{\dagger}$ on all the modes which changes $n_i \rightarrow 1-n_i$, $B \rightarrow -B^*$ (still an arbitrary antisymmetric matrix), $\sum A_{ij} a^{\dagger}_i a_{j} \rightarrow \mathrm{Tr}(A) - \sum A_{ij}^{*} a_{i}^{\dagger} a_j$. We can ignore the $\mathrm{Tr}(A)$ shift of the Hamiltonian $H_0$ since that does not affect the zeros of the partition function. The remaining matrix $-A^{*}$ is not PSD but the shifted matrix $\lambda_{\max}(A) - A^{*}$ is PSD. Absorbing $\lambda_{\max}(A)$ into the strength of the uniform field and applying the above argument, we get that the partition function $\ZZ(h)$ is also zero-free when $\mathrm{Re}(h) < -\lambda_{\max}(A)$. This completes the proof of \cref{thm:Lee-Yang_theorem}.

Starting from \cref{thm:unit_disk_zero_free_interacting_fermions}, the proof of \cref{thm:spectral_gap} essentially follows from the arguments in \cite{rayudu2026spectral} where the spin Hamiltonians are replaced by the fermionic Hamiltonians from \cref{def:lee-yang}, and the occupation-number basis plays the role of the spin computational basis. We remark that it is not clear whether the techniques from \cite{bravyi2026efficient} can also be extended to fermionic Lee-Yang Hamiltonians. This is because the Lee-Yang property of the corresponding generating polynomial used in the spin case does not hold in the fermionic case.

\section{\label{sec:discussion}Discussion}

In this work, we have established a Lee-Yang theorem for fermionic systems that includes many strongly interacting models of interest. As the theory of zero-freeness has important consequences for both physics and computer science, we discuss some implications of our results within each domain.

\paragraph{Efficient algorithms for simulating interacting fermions.}
Despite the immense interest in simulating correlated fermions, using either classical \cite{RevModPhys.68.13,PhysRevX.5.041041,zheng2017stripe} or quantum \cite{PhysRevA.92.062318,Kivlichan2020improvedfault,bauer2020quantum} computers, there is a notable paucity in the number of models for which a polynomial-time algorithm is guaranteed to compute the ground-state energy to small controllable error. On the side of classical algorithms, the known models are broadly limited to free fermions, integrable systems \cite{lieb1968absence,kennedy1986itinerant}, and gapped one-dimensional chains \cite{landau2015polynomial,bravyi2017complexity}. Even for the SPF Hamiltonians discussed in \cref{sec:sign_problem}, it is not known whether or not any of their corresponding Markov chains mix in polynomial time. Nonetheless, even if all SPF Hamiltonians were somehow classically easy, there is evidence that the non-SPF models, which are still Lee-Yang (such as the interacting Hofstadter model), may remain classically hard \cite{motruk2016density}.

The literature is similarly constrained when examining quantum algorithms. By a Jordan-Wigner mapping, the closely related results on quantum spin models obeying a Lee-Yang theorem \cite{rayudu2026spectral,bravyi2026efficient} also apply at least to one-dimensional fermion chains, even in the gapless case. Lindbladian-based algorithms for preparing ground states are a promising approach \cite{zhan2026rapid,ding2026simple}, but their rigorous analyses are currently restricted to the perturbative regime. The algorithmic consequence of our results therefore captures a vast swath of strongly interacting fermion models not previously known to be easy, either classically or quantumly.

\paragraph{\label{sec:topological-order}Constraints on topological order.}

Here, we point out the implications for the existence of topological phases, in particular, two-dimensional gapped topological phases with ground-state degeneracy on a torus~\cite{wenniu1990,kitaev2003}.
According to \cref{thm:spectral_gap}, the gap of the system satisfies $\Delta_L(h)\geq ch$, where $c>0$ is a constant, for any fixed $h>0$.
Importantly, this lower bound does not depend on the system size $L$ or any coefficient strengths.
Suppose that a topological phase with ground-state degeneracy exists at $h=0$ and that this degeneracy is stable under sufficiently weak local perturbations.
For sufficiently small fixed $h>0$, let $\delta_L(h)=E_1(h)-E_0(h)$ denote the energy splitting between the two lowest-energy states.
This splitting must vanish as $L\to\infty$, yet the Lee-Yang spectral gap result implies
\begin{equation}
    \delta_L(h)=\Delta_L(h)\geq ch.
\end{equation}
Taking the thermodynamic limit at fixed $h>0$ therefore leads to a contradiction.
Thus, gapped topologically ordered phases with such stable ground-state degeneracy cannot occur in local systems for which the spectral gap theorem applies.

This result imposes a constraint on the half-filled repulsive Hubbard model on the honeycomb lattice, where numerical studies have been performed extensively and $h$ corresponds to a staggered Zeeman field.
Early quantum Monte Carlo calculations suggested the existence of a gapped phase between the semimetallic phase and the antiferromagnetically ordered phase~\cite{meng2010}.
However, subsequent studies of larger systems and improved finite-size analyses challenged this intermediate phase and strongly supported a direct and continuous transition between the semimetallic and antiferromagnetic phases~\cite{sorella2012,assaad2013,PhysRevX.6.011029}.
Under the stability assumption above, our result rules out interpreting the initially proposed gapped phase as a topologically ordered phase with ground-state degeneracy.
The direct and continuous semimetal-to-antiferromagnet transition supported by these subsequent studies is therefore consistent with our theorems.
We emphasize that our result does not exclude spin-gapped phases themselves; it does, however, exclude topologically ordered phases with the stable ground-state degeneracy as described above.

\section*{Acknowledgments}

C.R.\ acknowledges the funding support by UK Research and Innovation (UKRI) under the UK government's Horizon Europe funding guarantee EP/X032051/1.
T.M.\ was supported by JST FOREST Grant Number JPMJFR236N and by JSPS KAKENHI Grant Number JP26K00652.
The work of J.T.\ was supported by JSPS KAKENHI Grant Numbers JP25K17310, JP25H01391, JP25H01388, JP25K24845, and JP25K24848. 
A.Z.\ was supported by the National Nuclear Security Administration’s Advanced Simulation and Computing program. 

This article has been authored by an employee of National Technology \& Engineering Solutions of Sandia, LLC under Contract No.\ DE-NA0003525 with the U.S. Department of Energy (DOE). The employee owns all right, title and interest in and to the article and is solely responsible for its contents. The United States Government retains and the publisher, by accepting the article for publication, acknowledges that the United States Government retains a non-exclusive, paid-up, irrevocable, world-wide license to publish or reproduce the published form of this article or allow others to do so, for United States Government purposes. The DOE will provide public access to these results of federally sponsored research in accordance with the DOE Public Access Plan \url{https://www.energy.gov/downloads/doe-public-access-plan}.

\subsection*{AI Usage Disclosure}

The authors used AI-based tools, such as Claude and OpenAI Codex, as aids for suggesting proof ideas.
All proofs and conclusions were independently verified by the authors.

\bibliographystyle{alphaurl}
\bibliography{refs}

\end{document}